\newtheoremstyle{mytheorem}{3pt}{3pt}{\slshape}{}{\bfseries}{}{.5em}{}
\theoremstyle{mytheorem}
\newtheorem{theorem}{Theorem}
\newtheorem{lemma}{Lemma}
\theoremstyle{definition}
\newcommand{\Real}{\ensuremath{\mathbb{R}}}
\newcommand{\Plane}{\ensuremath{\mathbb{R}^2}}
\newcommand{\A}{\ensuremath{\mathcal{A}}}
\newcommand{\C}{\ensuremath{\mathcal{C}}}
\newcommand{\D}{\ensuremath{\mathcal{D}}}
\newcommand{\M}{\ensuremath{\mathcal{M}}}
\newcommand{\UE}{\ensuremath{\mathcal{U}}}
\newcommand{\LE}{\ensuremath{\mathcal{L}}}
\newcommand{\Poly}{\ensuremath{\mathcal{P}}}
\newcommand{\bd}{\ensuremath{\partial}}
\newcommand{\seg}{\overline}
\title{Querying Two Boundary Points for Shortest Paths in a Polygonal Domain\thanks{%
An extended abstract version of this paper will appear in Proceedings of
20th International Symposium on Algorithms and Computation (ISAAC 2009).
}}
\author{%
Sang Won Bae\thanks{Supported by the Brain Korea 21 Project.}
\and Yoshio Okamoto\thanks{Supported by Global COE Program
``Computationism as a Foundation for the Sciences'' and Grant-in-Aid
for Scientific Research from Ministry of Education, Science and
Culture, Japan, and Japan Society for the Promotion of Science.} }
\date{
{\small
\footnotemark[2] Department of Computer Science and Engineering, POSTECH, Pohang, Korea.\\
E-mail: \texttt{swbae@postech.ac.kr}\\
\footnotemark[3] Graduate School of Information Science and Engineering,\\ Tokyo Institute of Technology, Tokyo, Japan.\\
E-mail: \texttt{okamoto@is.titech.ac.jp}\\[5mm]}
}
\newbox\ProofSym
\renewenvironment{proof}[1][Proof.]{\O@proof{#1}}{\O@endproof}
\def\O@proof#1{\trivlist
   \@topsep\z@\@topsepadd\smallskipamount%
   \@ifstar{\item[]}{\item[\hskip\labelsep\it #1 ]}}
\def\O@endproof{\hfill\copy\ProofSym\linebreak\endtrivlist}
\begin{document}

\maketitle

\begin{abstract}
 We consider a variant of two-point Euclidean shortest path query problem:
 given a polygonal domain, build a data structure for two-point shortest path
 query, provided that query points always lie on the boundary of the domain.
 As a main result, we show that a logarithmic-time query for shortest paths
 between boundary points can be performed using
 $\tilde{O}(n^5)$
 preprocessing time and
 $\tilde{O}(n^5)$ space
 where $n$ is the  number of corners of the polygonal domain
 and the $\tilde{O}$-notation suppresses the polylogarithmic factor.
 This is realized by observing a connection between Davenport-Schinzel
 sequences and our problem in the parameterized space.
 We also provide a tradeoff between space and query time; a sublinear time
 query is possible using $O(n^{3+\epsilon})$ space.
 Our approach also extends to the case where query points should lie on
 a given set of line segments.
%
\end{abstract}

\section{Introduction} \label{sec:intro}
A polygonal domains $\Poly$ with $n$ corners and $h$ holes is a
polygonal region of genus $h$ whose boundary consists of $n$ line
segments. The holes and the outer boundary of $\Poly$ are regarded
as \emph{obstacles}. Then, the geodesic distance between any two
points $p,q$ in a given polygonal domain $\Poly$ is defined to be
the length of a shortest obstacle-avoiding path between $p$ and $q$.

The Euclidean shortest path problem in a polygonal domain has drawn
much attention in the history of computational
geometry~\cite{m-spn-04}. In the \emph{two-point shortest path
query} problem, we preprocess $\Poly$ so that we can determine a
shortest path (or its length) quickly for a given pair of query
points $p,q \in \Poly$. While we can compute a shortest path in
$O(n\log n)$ time from scratch~\cite{hs-oaespp-99}, known structures
for logarithmic time query require significantly large
storage~\cite{cm-tpespqp-99}. Chiang and
Mitchell~\cite{cm-tpespqp-99} developed several data structures that
can answer a two-point query quickly with tradeoffs between storage
usage and query time. Most notably, $O(\log n)$ query time can be
achieved by using $O(n^{11})$ space and preprocessing time;
sublinear query time by $O(n^{5+\epsilon})$ space and preprocessing
time.
More recently, Guo et al.~\cite{gms-spqpd-08} have shown that
a data structure of size $O(n^2)$ can be constructed in $O(n^2 \log n)$ time
to answer the query in $O(h \log n)$ time, where $h$ is the number of holes.
Their results are summarized in \tablename\
\ref{table:summary}. For more results on shortest paths in a
polygonal domain, we refer to survey articles by Mitchell~\cite{m-spn-04, m-spaop-96}.

In this paper, we focus on a variant of the problem, in which
possible query points are restricted to a subset of $\Poly$;
the boundary $\bd \Poly$ of the domain $\Poly$ or a set of line segments within $\Poly$.
In many applications, possible pairs of source and destination do not span
the whole domain $\Poly$ but a specified subset of $\Poly$.
For example, in an urban planning problem, the obstacles correspond to
the residential areas and the free space corresponds to the walking
corridors. Then, the query points are restricted to the spots where
people depart and arrive, which are on the boundary of obstacles.

Therefore, our goal is to design a data structure using much less
resources than structures of Chiang and Mitchell \cite{cm-tpespqp-99}
when the query domain is restricted to the boundary of a given
polygonal domain $\Poly$ or to a set of segments in $\Poly$.
To our best knowledge, no prior work seems to investigate this variation.
As a main result, in Section~\ref{sec:log_query}, we present a data structure
of size $O(n^4\lambda_{66}(n))$ that can be constructed
in $O(n^4 \lambda_{65}(n)\log n)$ time and can answer
a $\bd\Poly$-restricted two-point shortest path query in $O(\log n)$ time.
Here, $\lambda_{m}(n)$ stands for the maximum length of
a Davenport-Schinzel sequence of order $m$ on $n$ symbols~\cite{sa-dsstga-95}.
It is good to note that $\lambda_{m}(n)=O(n\log^* n)$ for any constant $m$ as a
convenient intuition,
while tighter bounds are known \cite{sa-dsstga-95,n-ibntdsstg-09}.
We also provide a tradeoff between space and query time in Section~\ref{sec:tradeoff}.
In particular, we show that one can achieve sublinear query time using $O(n^{3+\epsilon})$
space and preprocessing time. New results in this paper are also summarized in \tablename\ \ref{table:summary}.

\begin{table}[t]
\label{table:summary} \centering \caption{Summary of new and known
results on exact two-point shortest path queries, where $\epsilon > 0$ is arbitrary and
$0 < \delta \leq 1$ is a parameter. [new] denotes our results.} 
\begin{tabular}{|c|c|c|c|c|}
\hline
Query domain & Preprocessing time & Space & Query time & Ref. \\
\hline
$\Poly$ & $O(n^{11})$ & $O(n^{11})$ & $O(\log n)$ & \cite{cm-tpespqp-99} \\
$\Poly$ & $O(n^{10}\log n)$ & $O(n^{10}\log n)$ & $O(\log^2 n)$ & \cite{cm-tpespqp-99} \\
$\Poly$ & $O(n^{5+10\delta+\epsilon})$ & $O(n^{5+10\delta+\epsilon})$ & $O(n^{1-\delta} \log n)$ & \cite{cm-tpespqp-99} \\
$\Poly$ & $O(n^{5})$ & $O(n^{5})$ & $O(\log n+h)$ & \cite{cm-tpespqp-99} \\
$\Poly$ & $O(n+h^{5})$ & $O(n+h^{5})$ & $O(h\log n)$ & \cite{cm-tpespqp-99} \\
$\Poly$ & $O(n^2 \log n)$ & $O(n^{2})$ & $O(h \log n)$ & \cite{gms-spqpd-08} \\
$\bd \Poly$ & $O(n^{4}\lambda_{65}(n)\log n)$ & $O(n^{4}\lambda_{66}(n))$ & $O(\log n)$ & [new]\\
$\bd \Poly$ & $O(n^{3+\delta}\lambda_{65}(n^\delta)\log n)$ & $O(n^{3+\delta}\lambda_{66}(n^\delta))$ & $O(n^{1-\delta} \log n)$ & [new] \\
$m$ segments & $O(m^2n^{3+\delta}\lambda_{65}(n^\delta)\log n)$ & $O(m^2n^{3+\delta}\lambda_{66}(n^\delta))$ & $O(n^{1-\delta} \log (m+n))$ & [new] \\
\hline
\end{tabular}
\end{table}
Our data structure is a subdivision of two-dimensional domain
parameterized in a certain way.
The domain is divided into a number of grid cells in which a set of
constrained shortest paths between query points have the same
structure. Each grid cell is divided according to the projection of the
lower envelope of functions stemming from the constrained shortest
paths. With careful investigation into this lower envelope,
we show the claimed upper bounds.

Also, our approach readily extends to the variant where query points are
restricted to lie on a given segment or a given set of segments in $\Poly$.
We discuss this extension in Section~\ref{sec:extension}.

\subsection{Related Work}
In the case where $\Poly$ is a simple polygon ($h=0$), the two-point shortest
path query can be answered in $O(\log n)$ time after $O(n)$ preprocessing time~\cite{gh-ospqsp-89}.
More references and results on shortest paths in simple polygons can be found
in a survey article by O'Rourke and Suri~\cite{os-p-04}

Before Chaing and Mitchell~\cite{cm-tpespqp-99}, fast two-point shortest path queries
in polygonal domains were considered as a challenge.
Due to this difficulty, many researchers have focused on the \emph{approximate}
two-point shortest path query problem.
Chen~\cite{c-apespp-95} achieved an $O(n \log n)$-sized structure for $(6+\epsilon)$-approximate
shortest path queries in $O(\log n)$ time, and also pointed out that
a method of Clarkson~\cite{c-aaspmp-87} can be applied to answer
$(1+\epsilon)$-approximate shortest path queries in $O(\log n)$ time
using $O(n^2)$ space and $O(n^2  \log n)$ preprocessing time.
Later, Arikati et al.~\cite{accdsz-psaspqop-96} have improved the above results based on planar spanners.

The problem on polyhedral surfaces also have been considered.
Agarwal et al.~\cite{aaos-supa-97} presented a data structure of size $O(n^6m^{1+\delta})$
that answers a two-point shortest path query on a given \emph{convex} polytope
in $O((\sqrt{n}/m^{1/4})\log n)$ time after $O(n^6m^{1+\delta})$ preprocessing time
for any fixed $1\leq m \leq n^2$ and any $\delta >0$.
They also considered the problem where the query points are restricted to lie on the edges
of the polytope, reducing the bounds by a factor of $n$
from the general case.
Recently, Cook IV and Wenk~\cite{cw-sppps-09} presented an improved method using
kinetic Voronoi diagrams.

\section{Preliminaries} \label{sec:pre}
We are given as input a polygonal domain $\Poly$ with $h$ holes and
$n$ corners. More precisely, $\Poly$ consists of an outer simple
polygon in the plane $\Plane$ and a set of $h$ ($\geq 0$) disjoint open
simple polygons inside $P$. As a set, $\Poly$ is the region
contained in its outer polygon \emph{excluding} the holes, also
called the \emph{free space}. The complement of $\Poly$ in the plane
is regarded as \emph{obstacles} so that any feasible path does not
cross the boundary $\bd \Poly$ and lies inside $\Poly$. It is well
known from earlier works that there exists a \emph{shortest
(obstacle-avoiding) path} between any two points $p, q \in
\Poly$~\cite{m-spaop-96}.

Let $V$ be the set of all corners of $\Poly$.  Then any shortest path
from $p\in \Poly$ to $q \in \Poly$ is a simple polygonal path and
can be represented by a sequence of line segments connecting points
in $V\cup \{p,q\}$~\cite{m-spaop-96}. The \emph{length} of a
shortest path is the sum of the Euclidean lengths of its segments.
The geodesic distance, denoted by $d(p,q)$, is the length of a
shortest path between $p$ and $q$. Also, we denote by $|\seg{pq}|$
the Euclidean length of segment $\seg{pq}$.

A \emph{two-point shortest path query} is given as a pair of points
$(p,q)$ with $p,q\in \Poly$ and asks to find a shortest path between
$p$ and $q$. In this paper, we deal with a restriction where the
queried points $p$ and $q$ lie on the boundary $\bd \Poly$.

A \emph{shortest path tree} $SPT(p)$ for a given source point $p\in \Poly$ is a
spanning tree on the corners $V$ plus the source $p$ such that the
unique path to any corner $v\in V$ from the source $p$ in $SPT(p)$
is a shortest path between $p$ and $v$. The combinatorial complexity of $SPT(p)$
for any $p\in \Poly$ is at most linear in $n$. A shortest path map
$SPM(p)$ for the source $p$ is a decomposition of the free space
$\Poly$ into cells in which any point $x$ has a shortest path to $p$
through the same sequence of corners in $V$. Once $SPT(p)$ is
obtained, $SPM(p)$ can be computed as an additively weighted Voronoi
diagram of $V \cup \{p\}$ with weight assigned by the geodesic
distance to $p$~\cite{m-spaop-96}; thus, the combinatorial
complexity of $SPM(p)$ is linear. A cell of $SPM(p)$ containing a
point $q\in \Poly$ has the common last corner $v \in V$ along the
shortest path from $p$ to $q$; we call such a corner $v$ the
\emph{root} of the cell or of $q$ with respect to $p$. An
$O(n \log n)$ time algorithm, using $O(n\log n)$ working space, to
construct $SPT(p)$ and $SPM(p)$ is presented by Hershberger and
Suri~\cite{hs-oaespp-99}.

An \emph{SPT-equivalence decomposition} $\A^{SPT}$ of $\Poly$ is the
subdivision of $\Poly$ into cells in which every point has
topologically equivalent shortest path tree. An $\A^{SPT}$ can be
obtained by overlaying $n$ shortest path maps $SPM(v)$ for every
corner $v\in V$~\cite{cm-tpespqp-99}. Hence, the complexity of
$\A^{SPT}$ is $O(n^4)$. Note that $\A^{SPT} \cap \bd \Poly$ consists
of at most $O(n^2)$ points; they are intersection points between any
edge of $SPM(v)$ for any $v\in V$ and the boundary $\bd \Poly$. We
call those intersection points, including the corners $V$, the
\emph{breakpoints}. The breakpoints induce $O(n^2)$ intervals along
$\bd \Poly$. We shall say that a breakpoint is \emph{induced by
$SPM(v)$} if it is an intersection of an edge of $SPM(v)$ and $\bd
\Poly$.

Given a set $\Gamma$ of algebraic surfaces and surface patches in
$\Real^d$, the \emph{lower envelope} $\mathcal{L}(\Gamma)$ of
$\Gamma$ is the set of pointwise minima of all given surfaces or patches
in the $d$-th coordinate. The \emph{minimization diagram}
$\M(\Gamma)$ of $\Gamma$ is a decomposition of $\Real^{d-1}$ into
faces, which are maximally connected region over which
$\mathcal{L}(\Gamma)$ is attained by the same set of functions. In
particular, when $d=3$, the minimization diagram $\M(\Gamma)$ is
simply a projection of the lower envelope onto the $xy$-plane.
Analogously, we can define the \emph{upper envelope} and the
\emph{maximization diagram}.

As we intensively exploit known algorithms on algebraic surfaces or
surface patches and their lower envelopes, we assume a model of
computation in which several primitive operations dealing with a
constant number of given surfaces can be performed in constant time:
testing if a point lies above, on or below a given surface,
computing the intersection of two or three given surfaces,
projecting down a given surface, and so on. Such a model of
computation has been adopted in many research papers;
see~\cite{s-atublehd-94,aas-cefda-97,sa-dsstga-95,as-ata-00}.

%

\section{Structures for Logarithmic Time Query} \label{sec:log_query}
In this section, we present a data structure that answers a
two-point query restricted on $\bd \Poly$ in $O(\log n)$ time.
To ease discussion, we parameterize the boundary $\bd \Poly$. Since
$\bd \Poly$ is a union of $h+1$ closed curves, it can be done by
parameterizing each curve by arc length and merging them into one.
Thus, we have a bijection $p \colon [0, |\bd \Poly|) \to
\bd \Poly$ that maps a one-dimensional interval into $\bd \Poly$,
where $|\bd \Poly|$ denotes the total lengths of the $h+1$ closed curves
forming $\bd \Poly$.
Conversely, the inverse of $p$ maps each interval along $\bd \Poly$ to
an interval of $[0, |\bd \Poly|)$.


%

A shortest path between two points $p, q \in \Poly$ is either the
segment $\seg{pq}$ or a polygonal chain through corners in $V$.
Thus, unless $d(p,q) = |\seg{pq}|$, the geodesic distance is taken
as the minimum of the following functions $f_{u,v} \colon [0, |\bd
\Poly|) \times [0, |\bd \Poly|) \to \Real$ over all $u,v \in V$,
which are defined as follows:
\begin{displaymath}
 f_{u,v}(s,t) :=
 \begin{cases}
 |\seg{p(s)u}| + d(u,v) + |\seg{v p(t)}| & \text{if $u \in VP(p(s))$ and $v\in VP(p(t))$},\\
 \infty & \text{otherwise},
 \end{cases}
\end{displaymath}
where $VP(x)$, for any point $x\in \Poly$, denotes the
\emph{visibility profile} of $x$, defined as the set of all points
$y \in \Poly$ that are \emph{visible from} $x$; that is, $\seg{xy}$
lies inside $\Poly$. The symbol $\infty$ can be replaced by an upper
bound of $\max_{s,t} d(p(s), p(t))$; for example, the total length $|\bd \Poly|$
of the boundary of the polygonal domain $\Poly$.

Since the case where $p(s)$ is visible from $p(t)$, so the shortest
path between them is just the segment $\seg{p(s)p(t)}$, can be
checked in $O(\log n)$ time using $O(n^2 \log n)$
space~\cite{cm-tpespqp-99}, we assume from now on that $p(s) \notin
VP(p(t))$. Hence, our task is to efficiently compute the lower
envelope of the $O(n^2)$ functions $f_{u,v}$ on a 2-dimensional
domain $\D := [0, |\bd \Poly|) \times [0, |\bd \Poly|)$.

\subsection{Simple lifting to 3-dimension}
Using known results on the lower envelope of the algebraic surfaces
in $3$-dimension, we can show that a data structure of size
$O(n^{6+\epsilon})$ for $O(\log n)$ query can be built in
$O(n^{6+\epsilon})$ time as follows.

Fix a pair of intervals $I_s$ and $I_t$ induced by the breakpoints.
Since $I_s$ belongs to a cell of an SPT-equivalence
decomposition, $VP(p(s))$ is independent of choices over all
$s \in I_s$.  Therefore, the set $V_s := V \cap VP(p(s))$ of corners
visible from $p(s)$ is also independent of the choice of $s\in I_s$ and further, for a fixed
$u\in V_s$, there exists a unique $v \in V$ that minimizes $f_{u,v}(s,t)$ for any
$(s,t) \in I_s\times I_t$ over all $v\in V$~\cite{cm-tpespqp-99}.
This implies that for
each such subdomain $I_s \times I_t \subset \D$ we extract at most
$n$ functions, possibly appearing at the lower envelope. Moreover,
in $I_s \times I_t$, such a function is represented explicitly; for
$u \in V_s$ and $v\in V_t$,
 \[ f_{u,v}(s,t) = \sqrt{ (x(s) {-} x_u)^2 + (y(s) {-} y_u)^2} + d(u,v) + \sqrt{
 (x(t){-}x_v)^2 + (y(t){-}y_v)^2},\]
where $x(s)$ and $y(s)$ are the $x$- and the $y$-coordinates of
$p(s)$, and $x_u$ and $y_u$ are the $x$- and the $y$-coordinates of
a point $u \in \Plane$. Note that $x(s)$ and $y(s)$ are linear
functions in $s$ by our parametrization.

For each $u \in V$, there exists a corner $v \in V_t$ that
minimizes $f_{u,v'}$ in $I_s \times I_t$ over all $v' \in V_t$.
Therefore, the function $g_u := \min_{v \in V_t}f_{u,v}$ on $I_s \times I_t$
is well-defined.
Observe that the graph of $g_u$ is an algebraic surface with degree at most $4$ in
$3$-dimensional space. Applying any efficient algorithm that
computes the lower envelope of algebraic surfaces in $\Real^3$, we
can compute the lower envelope of the functions $g_u$ in
$O(n^{2+\epsilon})$ time~\cite{s-atublehd-94}. Repeating this for
every such subdomain $I_s\times I_t$ yields $O(n^{6+\epsilon})$
space and preprocessing time.

Since we would like to provide a point location structure in domain
$\D$, we need to find the minimization diagram $\M$ of the computed
lower envelope. Fortunately, our domain is $2$-dimensional, so we
can easily project it down on $\D$ and build a point
location structure with an additional logarithmic factor.

In another way around, one could try to deal with \emph{surface
patches} on the whole domain $\D$. Consider a fixed corner $u\in V$
and its shortest path map $SPM(u)$. The number of breakpoints
induced by $SPM(u)$ is at most $O(n)$, including the corners $V$
themselves. This implies at most an $O(n^2)$ number of
combinatorially different paths between any two boundary points
$p(s)$ and $p(t)$ via $u$. That is, for a pair of intervals $I_s$
and $I_t$, we have a unique path via $u$ and its length is
represented by a partial function of $(s,t)$ defined on a
rectangular subdomain $I_s \times I_t \subset \D$. Hence, we have
$O(n^2)$ such partial functions for each $u\in V$, and thus $O(n^3)$
in total. Each of them defines an algebraic surface patch of
constant degree on a rectangular subdomain. Consequently, we can
apply the same algorithm as above to compute the lower envelope of those
patches in $O((n^3)^{2+\epsilon}) = O(n^{6+\epsilon})$ time and
space.

\subsection{$O(n^{5+\epsilon})$-space structure}
Now, we present a way of proper grouping of subdomains to reduce the
time/space bound by a factor of $n$.
We call a subdomain $I_s \times I_t \subset \D$, where both $I_s$
and $I_t$ are intervals induced by breakpoints, a \emph{grid cell}.
Thus, $\D$ consists of $O(n^4)$ grid cells. We will decompose $\D$
into $O(n^3)$ \emph{blocks} of $O(n)$ grid cells.

Consider a pair of boundary edges $S, T \subset \bd \Poly$ and let
$b_S$ and $b_T$ be the number of breakpoints on $S$ and on $T$,
respectively. Let $p(s_0), \ldots, p(s_{b_S})$
and $p(t_0), \ldots, p(t_{b_T})$ be the breakpoints on $S$ and on $T$,
respectively, in order $s_0 < s_1 < \cdots < s_{b_S}$ and $t_0 < t_1
< \cdots < t_{b_T}$. Take $b_T$ grid cells with $s \in [s_0, s_1)$
and let $\C := [s_0,s_1) \times [t_0, t_{b_T}) \subseteq [s_0,
s_{b_S}) \times [t_0, t_{b_T})$ be their union. We redefine the
functions $f_{u,v}$ on domain $\C$. As discussed above, for any
$s\in [s_0, s_1)$, we have a common subset $V_s$ of corners visible
from $p(s)$.

For any $u\in V_s$, let $g_u(s,t) := \min_{v\in V} f_{u,v}(s,t)$ be
a function defined on $\C$ and
$b_{T}^u$ be the number of breakpoints on $T$ induced by $SPM(u)$.
The following is our key observation.
\begin{lemma} \label{lemma:patch}
 The graph of $g_u(s,t)$ on $\C$ consists of at most $b_T^u +
 1$ algebraic surface patches with constant maximum degree.
\end{lemma}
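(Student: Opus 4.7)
The plan is to rewrite $g_u$ in geometric terms and then exploit the structure of $SPM(u)$ along the boundary edge $T$. For any $(s,t) \in \C$, since every $u \in V_s$ is visible from $p(s)$ for all $s \in [s_0,s_1)$, we have
\[
 g_u(s,t) \;=\; |\seg{p(s)u}| \;+\; \min_{v \in V \cap VP(p(t))} \bigl(d(u,v) + |\seg{v p(t)}|\bigr),
\]
because any $v \notin VP(p(t))$ contributes $\infty$ to $f_{u,v}$ and cannot attain the minimum. The inner minimum equals the geodesic distance $d(u,p(t))$: it is a lower bound by the (geodesic) triangle inequality, and it is attained at $v = v^*(t)$, the root of the cell of $SPM(u)$ containing $p(t)$, which is always visible from $p(t)$ by definition of a shortest path map and satisfies $d(u,p(t)) = d(u,v^*(t)) + |\seg{v^*(t)p(t)}|$.

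Next, I would partition $[t_0, t_{b_T})$ according to which cell of $SPM(u)$ contains $p(t)$. By the definition of $b_T^u$, the edge $T$ is crossed by exactly $b_T^u$ edges of $SPM(u)$, so $[t_0, t_{b_T})$ is subdivided into at most $b_T^u + 1$ maximal subintervals $J_0, \ldots, J_{b_T^u}$ on each of which the root $v^*(t)$ equals a single fixed corner $v_i \in V$. On the corresponding rectangular subdomain $\C_i := [s_0, s_1) \times J_i \subset \C$, the function admits the explicit closed form
\[
 g_u(s,t) \;=\; \sqrt{(x(s) - x_u)^2 + (y(s) - y_u)^2} \;+\; d(u, v_i) \;+\; \sqrt{(x(t) - x_{v_i})^2 + (y(t) - y_{v_i})^2},
\]
where $x(s),y(s)$ and $x(t),y(t)$ are linear in $s$ and $t$ respectively, since $S$ and $T$ are straight edges parameterized by arc length.

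Setting $z := g_u(s,t) - d(u,v_i)$ and squaring twice to eliminate the two independent square roots yields a polynomial relation in $(s,t,z)$ of bounded degree, exactly as in the simple-lifting argument earlier in the paper. Hence the graph of $g_u$ over each $\C_i$ is an algebraic surface patch of constant maximum degree, and since the subdomains $\C_0,\ldots,\C_{b_T^u}$ tile $\C$, we obtain the claimed at most $b_T^u + 1$ patches. The only substantive step is the reduction $\min_v \bigl(d(u,v) + |\seg{vp(t)}|\bigr) = d(u,p(t))$ via the root $v^*(t)$; once this identification is established, the patch count is immediate from the definition of $b_T^u$ and the closed-form expression follows directly from the linear parameterization of $S$ and $T$.
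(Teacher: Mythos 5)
Your proposal is correct and follows essentially the same route as the paper: identify the minimizing $v$ in $g_u = \min_v f_{u,v}$ as the root of the cell of $SPM(u)$ containing $p(t)$, observe that walking along $T$ one meets $b_T^u$ breakpoints induced by $SPM(u)$ and hence at most $b_T^u+1$ cells, and note that on each resulting rectangular subdomain $g_u$ has an explicit constant-degree algebraic form. Your version simply makes explicit the identification $\min_v(d(u,v)+|\seg{vp(t)}|)=d(u,p(t))$ and the degree bound that the paper leaves implicit.
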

\begin{proof}
If $g_u(s,t) = f_{u,v}(s,t)$ for any $(s,t)\in \C$ and some $v\in
V$, then $p(t)$ lies in a cell of $SPM(u)$ with root $v$; by the
definition of $g_u$, the involved path goes directly from $p(s)$ to
$u$ and follows a shortest path from $u$ to $p(t)$. On the other
hand, when we walk along $T$ as $t$ increases from $t_1$ to
$t_{b_T}$, we encounter $b_T^u$ breakpoints induced by $SPM(u)$;
thus, $b_T^u + 1$ cells of $SPM(u)$. Hence, the lemma is shown.
\end{proof}
Moreover, the partial function corresponding to each patch of
$\gamma_u$ is defined on a rectangular subdomain $[s_0, s_1) \times
[t_i, t_j)$ for some $1\leq i < j \leq b_T$. This implies that the
lower envelope of $g_u$ on $\C$ is represented by that of at most
$\sum_u (b_T^u + 1) = n + b_T$ surface patches.

Though this envelope can be computed in $O((n+b_T)^{2+\epsilon})$
time, we do further decompose $\C$ into $\lceil \frac{b_T}{n}
\rceil$ blocks of at most $n$ grid cells. This can be simply done by
cutting $\C$ at $t = t_{in}$ for each $i = 1, \ldots, \lfloor
\frac{b_T}{n} \rfloor$. For each such block of grid cells, we have
at most $2n$ surface patches and thus their lower envelope can be
computed in $O(n^{2+\epsilon})$ time. Hence, we obtain the following
consequence.
\begin{theorem}
 One can preprocess a given polygonal domain $\Poly$ in
 $O(n^{5+\epsilon})$ time into a data structure of size
 $O(n^{5+\epsilon})$ that answers the two-point shortest path
 query restricted to the boundary $\bd \Poly$ in $O(\log n)$-time,
 where $\epsilon$ is an arbitrarily small positive number.
\end{theorem}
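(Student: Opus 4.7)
The plan is to assemble the ingredients developed earlier in the section: the block decomposition of $\D$, the patch count from Lemma~\ref{lemma:patch}, and a standard point location structure. First, I would formally define the block decomposition. For every ordered pair $(S,T)$ of boundary edges, I split the strip $[s_0,s_{b_S})\times[t_0,t_{b_T})$ vertically at the $s$-breakpoints into $b_S$ sub-strips of the form $[s_{i-1},s_i)\times[t_0,t_{b_T})$, and then further slice each sub-strip horizontally at $t=t_{jn}$ for $j=1,\ldots,\lfloor b_T/n\rfloor$, obtaining $\lceil b_T/n\rceil$ blocks of at most $n$ grid cells each. The total number of blocks is therefore at most
\[
 \sum_{S,T}b_S\bigl\lceil b_T/n\bigr\rceil \ \le\ \tfrac{1}{n}\Bigl(\sum_{S}b_S\Bigr)\Bigl(\sum_{T}b_T\Bigr) + \Bigl(\sum_{S}b_S\Bigr)\cdot(\text{\#edges}),
\]
and since $\sum_S b_S = O(n^2)$ and the number of edges is $O(n)$, this is $O(n^3)$.

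Next I would bound the complexity of each block. Fix a block $\C'\subseteq\C$ of at most $n$ grid cells. By the argument preceding Lemma~\ref{lemma:patch}, for every $u\in V_s$ the function $g_u$ restricted to $\C'$ consists of at most $b^u_{T,\C'}+1$ algebraic patches of constant degree, where $b^u_{T,\C'}$ counts the breakpoints on $T$ inside $\C'$ that are induced by $SPM(u)$. Summing over $u\in V_s$ gives $\sum_u(b^u_{T,\C'}+1)\le n+n=2n$, because the breakpoints on $T$ inside $\C'$ total at most $n$ by construction. I can then invoke the lower-envelope algorithm for algebraic surface patches of constant description complexity in $\Real^3$~\cite{s-atublehd-94} to compute the lower envelope, and hence its projected minimization diagram $\M$, in $O(n^{2+\epsilon})$ time and space. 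Summing over the $O(n^3)$ blocks yields the claimed $O(n^{5+\epsilon})$ preprocessing time and space.

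For the query structure, I would build two layers of point location. The outer layer is a planar point-location structure over the $O(n^3)$ axis-aligned rectangular blocks tiling $\D$; since this is an orthogonal subdivision, $O(\log n)$ query time is achievable within the claimed space bound. The inner layer, built once per block, is a planar point-location structure on the $O(n^{2+\epsilon})$-complexity minimization diagram $\M$ of that block, again allowing $O(\log n)$ queries. Finally, the visible case $p(s)\in VP(p(t))$ is handled in parallel by the $O(n^2\log n)$-space, $O(\log n)$-query visibility structure of~\cite{cm-tpespqp-99}, taking the smaller of the two answers.

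The main technical point — and the only place that truly uses the geometry — is the patch count in each block; everything else is bookkeeping. The reason the telescoping argument works is that slicing each sub-strip at multiples of $n$ grid cells forces the local breakpoint count on $T$ to be $O(n)$, which is precisely what makes $\sum_u(b^u_{T,\C'}+1)=O(n)$ rather than $O(n+b_T)$. Once that bound is in place, standard lower-envelope machinery~\cite{s-atublehd-94,sa-dsstga-95} and planar point location combine to give the stated preprocessing, space, and query bounds.
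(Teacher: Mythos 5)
Your proposal is correct and follows essentially the same route as the paper: the same slicing of each $(S,T)$-strip into blocks of at most $n$ grid cells at $t=t_{jn}$, the same use of Lemma~\ref{lemma:patch} to bound the patches per block by $2n$, the same $O(n^{2+\epsilon})$ lower-envelope computation per block, and the same summation (your explicit $O(n^3)$ block count is just the paper's $\sum_{S,T} b_S\lceil b_T/n\rceil$ evaluated separately). The only differences are presentational: you spell out the two-level point location and the handling of the mutually visible case, which the paper treats more tersely elsewhere in the section.
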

\begin{proof}
Recall that $\sum_S b_S = \sum_T b_T = O(n^2)$. For a pair of
boundary edges $S$ and $T$, we can compute the lower envelope of the
functions $f_{u,v}$ in $O(b_S \lceil \frac{b_T}{n} \rceil
n^{2+\epsilon})$. Summing this over every pair of boundary edges, we
have
 \[ \sum_{S,T} O(b_S \lceil \frac{b_T}{n} \rceil n^{2+\epsilon}) =
 O(n^{4+\epsilon})\cdot \sum_T (\frac{b_T}{n} + 1) =
 O(n^{5+\epsilon}).\]
A point location structure on the minimization diagram can be built
with additional logarithmic factor, which is subdued by
$O(n^{\epsilon})$.
\end{proof}

\subsection{Further improvement}
The algorithms we described so far compute the lower envelope of
surface patches in $3$-space in order to obtain the minimization
diagram $\M$ of the functions $f_{u,v}$. In this subsection, we
introduce a way to compute $\M$ rather directly on $\D$, based on
more careful analysis.

Basically, we make use of the same scheme of partitioning the domain
$\D$ into blocks of (at most) $n$ grid cells as in
Section~\ref{sec:log_query}.2. Let $\C$ be such a block defined as
$[s_0, s_1) \times [t_0, t_{b_T})$ such that $[s_0, s_1)$ is an interval
induced by the breakpoints and $[t_0, t_{b_T})$ is a union of
$b_T \le n$ consecutive intervals in which we have $b_T-1$ breakpoints $t_1,
\dots, t_{b_T-1}$.

By Lemma~\ref{lemma:patch}, the functions $g_u=\min_{v\in V}
f_{u,v}$ restricted to $\C$ can be split into at most $2n$ partial
functions $h_i$ with $1\leq i \leq 2n$ defined on a subdomain $\C_i
\subseteq \C$. Each $h_i(s,t)$ is represented explicitly as
$h_i(s,t) = |p(s) u_i| + d(u_i, v_i) + |v_i p(t)|$ in $\C_i$, so
that we have $h_i(s,t) = f_{u_i, v_i}(s,t)$ for any $(s,t) \in \C_i$
and some $u_i, v_i \in V$. Note that it may happen that $u_i = u_j$
or $v_i = v_j$ for some $i$ and $j$; in particular, if $u_i = u_j$,
we have $\C_i \cap \C_j = \emptyset$. Also, as discussed in
Section~\ref{sec:log_query}.2, $\C_i$ is represented as $[s_0,
s_1)\times [t_k, t_{k'})$ for some $0 \leq k < k' \leq b_T$.

In this section, we take the partial functions $h_i$ into account,
and thus the goal is to compute the minimization diagram $\M$ of
surface patches defined by the $h_i$. We start with an ordering on
the set $V_s$ of corners visible from $p(s)$ for any $s\in [s_0,
s_1)$ based on the following observation.
\begin{lemma} \label{lemma:ordering}
 The angular order of corners in $V_s$ seen at $s$ is constant
 if $s$ varies within $[s_0, s_1)$.
\end{lemma}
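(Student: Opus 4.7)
The plan is to argue by contradiction. Suppose the angular order of two visible corners $u, u' \in V_s$ around $p(s)$ changes at some $s^* \in (s_0, s_1)$. Because $p(s)$ traces a single line segment (a subset of one boundary edge of $\bd \Poly$), and the angular position of each fixed corner from $p(s)$ varies continuously in $s$, a swap in their angular order forces $p(s^*), u, u'$ to be collinear. After relabeling if necessary, we may assume that $u'$ lies strictly between $p(s^*)$ and $u$ on this line. The goal is then to derive a contradiction by showing that $p(s^*)$ must be a breakpoint, contradicting $s^* \in (s_0, s_1)$.

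The heart of the argument is a local analysis at $u'$. Since the segment $\seg{p(s^*)u}$ passes through $u'$ and lies in $\Poly$, the directions from $u'$ to $u$ and from $u'$ to $p(s^*)$ both lie in the free-space wedge at $u'$; these are opposite directions of the line $uu'$, so the free-space wedge has angular measure at least $\pi$. This forces $u'$ to be reflex in $\Poly$, and the complementary obstacle wedge at $u'$ (of angular measure strictly less than $\pi$) must lie entirely in one of the two open half-planes bounded by the line $uu'$. As $s$ crosses $s^*$ along the boundary edge, the segment $\seg{p(s)u}$ rotates about $u$ and, in a neighborhood of $u'$, its points shift by $\BigO(|s-s^*|)$ perpendicular to line $uu'$, to one side for $s > s^*$ and to the opposite side for $s < s^*$. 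A short geometric computation shows that whenever this shift moves the segment into the half-plane containing the obstacle wedge, the segment intersects the interior of the wedge in a subsegment of length $\BigO(|s-s^*|)$, blocking visibility of $u$ from $p(s)$. This contradicts the constancy of $V_s$ on $[s_0, s_1)$, which is guaranteed because the whole interval lies within a single cell of the SPT-equivalence decomposition $\A^{SPT}$.

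The main technical obstacle is handling degenerate configurations. If the boundary edge containing $p(s)$ is parallel to the line $uu'$, the segment $\seg{p(s)u}$ never sweeps perpendicular to $uu'$, but then $p(s), u, u'$ remain collinear throughout a subinterval and no angular swap can occur in the first place. A more delicate scenario arises when an obstacle edge incident to $u'$ is tangent to the line $uu'$: the swept segment grazes the obstacle rather than penetrating its interior, so the visibility argument above does not immediately apply. In that case, however, the line $uu'$ extended past $u'$ becomes a visibility-boundary (shadow-ray) edge of $VP(u)$, hence an edge of $SPM(u)$, so $p(s^*)$ still lies in $SPM(u) \cap \bd \Poly$ and is a breakpoint. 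Combining these cases completes the contradiction and establishes the lemma.
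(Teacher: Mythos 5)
Your proposal is correct and follows essentially the same route as the paper's (much terser) proof: a change in angular order forces collinearity of $p(s^*)$, $u$, $u'$ at some parameter $s^*$, and since the nearer corner sits on an obstacle, the farther one loses visibility on one side of $s^*$, so $p(s^*)$ would have to be a breakpoint (equivalently, $V_s$ would change), contradicting that $[s_0,s_1)$ is an interval induced by the breakpoints. Your extra case analysis (parallel/tangent configurations) only fleshes out details the paper leaves implicit.
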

\begin{proof}
If this is not true, we have such an $s' \in [s_0, s_1)$ that $p(s')$
and two corners $v, v'$ in $V_s$ are collinear. Since corners lie on
the boundary of an obstacle, one of $v$ and $v'$ is not visible
from $p(s)$ locally near $s'$; that is, $p(s')$ is a breakpoint, a
contradiction.
\end{proof}

\begin{figure}[t]
\centering
\includegraphics[width=.4\textwidth]{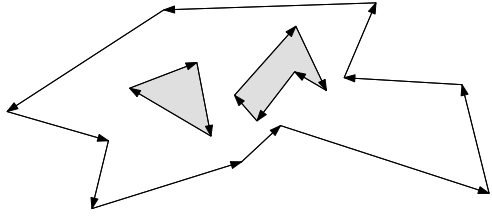}
\caption{As $s$ increases, $p(s)$ moves in direction where the
obstacle is to the right.} \label{fig:orientation}
\end{figure}

Without loss of generality, we assume that as $s$ increases, $p(s)$
moves along $\bd \Poly$ in direction that the obstacle lies to the
\emph{right}; that is, $p(s)$ moves \emph{clockwise} around each
hole and \emph{counter-clockwise} around the outer boundary of
$\Poly$.
(See Figure~\ref{fig:orientation}.)
By Lemma~\ref{lemma:ordering}, we order the corners in $V_s$ in
counter-clockwise order at $p(s)$ for any $s\in [s_0, s_1)$; let
$\prec$ be a total order on $V_s$ such that $u \prec u'$ if and only
if $\angle p(s_0)p(s) u < \angle p(s_0)p(s)u'$.


From now on, we investigate the set
 \[B(i, j) := \{ (s,t)\in \C_i \cap \C_j \mid h_i(s,t) = h_j(s,t) \},\]
which is a projection of the intersection of two surface patches
defined by $h_i$ and $h_j$. One can easily check that $B(i,j)$ is a
subset of an algebraic curve of degree at most $8$.



\begin{lemma} \label{lemma:monotone}
 The set $B(i,j)$ is $t$-monotone. That is, for fixed $t$, there is
 at most one $s\in [s_0, s_1)$ such that $(s,t) \in B(i,j)$.
\end{lemma}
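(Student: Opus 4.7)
The plan is to fix an arbitrary $t$ and show that the univariate function $\phi(s):=h_i(s,t)-h_j(s,t)$ is strictly monotone on $[s_0,s_1)$, which at once yields $t$-monotonicity of $B(i,j)$. Since the $t$-dependent terms and the geodesic constants $d(u_i,v_i),d(u_j,v_j)$ do not involve $s$, one has $\phi(s)=|\seg{p(s)u_i}|-|\seg{p(s)u_j}|+C(t)$, so the task reduces to establishing strict monotonicity of $\phi_0(s):=|\seg{p(s)u_i}|-|\seg{p(s)u_j}|$ on $[s_0,s_1)$. The remark preceding Lemma~\ref{lemma:ordering} that $\C_i\cap\C_j=\emptyset$ whenever $u_i=u_j$ lets me assume $u_i\neq u_j$ throughout.

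The next step is to locate the critical points of $\phi_0$. Because $p(s)$ traces a straight segment on a single boundary edge as $s$ ranges over $[s_0,s_1)$, I would set up coordinates placing that edge on the $x$-axis with $p(s)=(s,0)$, and $u_k=(a_k,b_k)$ for $k\in\{i,j\}$. Direct differentiation gives
\[
\phi_0'(s)\;=\;\frac{s-a_i}{|\seg{p(s)u_i}|}-\frac{s-a_j}{|\seg{p(s)u_j}|},
\]
and a routine algebraic simplification (clearing denominators, squaring, and checking signs using the visibility constraint $b_i,b_j>0$) reveals that $\phi_0'(s^\ast)=0$ holds precisely at the unique $s^\ast$ for which $p(s^\ast),u_i,u_j$ are collinear; in the degenerate case where the line $u_iu_j$ is parallel to the $x$-axis, no such $s^\ast$ exists and $\phi_0$ is already strictly monotone everywhere.

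The crucial remaining step is to rule out $s^\ast\in(s_0,s_1)$. Since $(s,t)\in\C_i\cap\C_j$ forces $u_i,u_j\in V_s$, both corners are visible from $p(s^\ast)$. Collinearity of $p(s^\ast)$ with two visible corners then means that $p(s^\ast)$ lies on a shadow ray from the farther corner through the nearer one; such a ray is an edge of $SPM(u_i)$ (or $SPM(u_j)$) separating cells with distinct roots, so its intersection with $\bd\Poly$ is, by the definition in Section~\ref{sec:pre}, a breakpoint. This contradicts the fact that $(s_0,s_1)$ contains no breakpoint. Consequently $\phi_0'$ has constant sign on $(s_0,s_1)$, and by continuity $\phi_0$ is strictly monotone on $[s_0,s_1)$, which gives the lemma.

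The main obstacle I anticipate is this last step: rigorously justifying that the collinear configuration forces $p(s^\ast)$ to be a breakpoint of the SPT-equivalence decomposition. I would handle it by recalling that shortest path maps refine the visibility decomposition, so that every ray from a corner $u$ through another corner visible from $u$ is an edge of $SPM(u)$, and hence its intersection with $\bd\Poly$ enters the breakpoint set. The monotonicity analysis itself is elementary once the critical-point computation is in hand.
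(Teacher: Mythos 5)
Your proof is correct and is essentially the paper's own argument in different clothing: the paper notes that for fixed $t$ the solutions $s$ lie on the intersection of the breakpoint-free edge with one branch of the hyperbola with foci $u_i,u_j$, and that two such intersections would force the transverse axis (the line through $u_i$ and $u_j$) to cross the edge at a point that must be a breakpoint, whereas you reach the same collinear configuration as the unique critical point of $s\mapsto|\seg{p(s)u_i}|-|\seg{p(s)u_j}|$ via a Rolle-type argument (valid precisely because visibility puts both corners strictly on one side of the edge's supporting line, so the two angles lie in $(0,\pi)$). Your final step --- that collinearity of $p(s^\ast)$ with two visible corners forces a breakpoint via a window edge of the shortest path map of the farther corner --- is exactly the fact the paper invokes here and in the proof of Lemma~\ref{lemma:ordering}, so there is no gap.
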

\begin{proof}

If $(s,t) \in B(i,j)$, we have equation $h_i(s,t) = h_j(s,t)$. From
the equation, we get $|p(s)u_i| - |p(s)u_j| = |v_jp(t)| + d(u_j,
v_j) - |v_ip(t)| - d(u_i, v_i)$. If we fix $t$ as constant,
$|p(s)u_i| - |p(s)u_j|$ remains a constant even if $s$ varies within
$[s_0, s_1)$. Thus, $p(s)$ is an intersection point with line
segment $\seg{p(s_0)p(s_1)}$ and a branch $H$ of a hyperbola whose
foci are $u_i$ and $u_j$. If $\seg{p(s_0)p(s_1)} \cap H$ consists of
two points, then the line $\ell$ through $u_i$ and $u_j$ must cross
$\seg{p(s_0)p(s_1)}$ at a point $z$; $\ell$ is the transverse axis
of $H$ and both $u_i$ and $u_j$ lie in one side of the line
supporting $\seg{p(s_0)p(s_1)}$. Such a crossing point $z$ is a
breakpoint by definition but we do not have any breakpoint within
$\seg{p(s_0)p(s_1)}$ since $[s_0, s_1)$ is an interval induced by
the breakpoints, a contradiction.
\end{proof}

\begin{lemma} \label{lemma:increasing}
 The set $B(i,j)$ is either an empty set or an open curve whose endpoints lie on the boundary
 of $\C_i \cap \C_j$. Moreover, $B(i,j)$ is either a linear segment
 parallel to the $t$-axis or $s$-monotone.
\end{lemma}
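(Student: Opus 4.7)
The plan is to separate the defining equation of $B(i,j)$ into one-variable parts and exploit strict monotonicity. Writing
\[
 h_i(s,t)-h_j(s,t) \;=\; A(s)+B(t)+C,
\]
with $A(s):=|\seg{p(s)u_i}|-|\seg{p(s)u_j}|$, $B(t):=|\seg{v_i p(t)}|-|\seg{v_j p(t)}|$, and $C:=d(u_i,v_i)-d(u_j,v_j)$, one sees that $B(i,j)$ is the zero set of this separable expression inside $\C_i\cap\C_j$. Since $u_i\neq u_j$ (else $\C_i\cap\C_j=\emptyset$), the hyperbola argument from the proof of Lemma~\ref{lemma:monotone} shows that $A$ is strictly monotonic on $[s_0,s_1)$.

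Next I would split into two cases according to whether $v_i=v_j$. If $v_i=v_j$, then $B\equiv 0$ and the defining equation collapses to $A(s)=-C$; by the strict monotonicity of $A$, this has at most one solution $s^*\in[s_0,s_1)$. Hence $B(i,j)$ is either empty or the vertical segment $\{s^*\}\times I$, where $I$ is the $t$-range of $\C_i\cap\C_j$, realizing the ``linear segment parallel to the $t$-axis'' alternative with endpoints on $\bd(\C_i\cap\C_j)$. If $v_i\neq v_j$, the aim is to show that $B$ is strictly monotonic on the open $t$-range $(t_k^*,t_{k'}^*)$ of $\C_i\cap\C_j$. Once that is in hand, $s=A^{-1}(-B(t)-C)$ is a continuous strictly monotonic function of $t$ on a subinterval, and $B(i,j)$ is its graph---a connected curve that is both $s$-monotone and $t$-monotone, with endpoints on $\bd(\C_i\cap\C_j)$.

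The hard part is the strict monotonicity of $B$. Differentiating gives $B'(t)=\cos\phi_i-\cos\phi_j$, where $\phi_k$ denotes the angle between the vector from $v_k$ to $p(t)$ and the unit direction of $T$; thus $B'(t)=0$ forces $\phi_i=\pm\phi_j$. The case $\phi_i=\phi_j$ makes $v_i,v_j,p(t)$ collinear with $v_i$ and $v_j$ lying on a common ray from $p(t)$, so the nearer corner blocks the farther from $p(t)$; but inside $\C_i\cap\C_j$ both $v_i$ (as the root of $p(t)$ w.r.t.\ $u_i$) and $v_j$ (w.r.t.\ $u_j$) are visible from $p(t)$, a contradiction. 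The case $\phi_i=-\phi_j$ would place $v_i$ and $v_j$ on opposite sides of the line supporting $T$; but for $p(t)$ in the interior of $T$, $\Poly$ sits locally on one side of this line, so every corner visible from $p(t)$ must lie on the $\Poly$-side, again a contradiction. Therefore $B'(t)\neq 0$ on $(t_k^*,t_{k'}^*)$, $B$ is strictly monotonic, and the conclusion follows.
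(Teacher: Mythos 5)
Your overall strategy matches the paper's: dispose of $u_i=u_j$, treat $v_i=v_j$ as the vertical-segment case via the $t$-free equation, and for $v_i\neq v_j$ show that the $t$-derivative of the ``$t$-part'' never vanishes on the interior of $I_t$ by an angle argument. Your packaging is slightly different and arguably cleaner: you separate $h_i-h_j$ into $A(s)+B(t)+C$, get strict monotonicity of $A$ from the hyperbola argument of Lemma~\ref{lemma:monotone}, and write the curve as $s=A^{-1}(-B(t)-C)$, whereas the paper differentiates the equation $h_i=h_j$ implicitly and argues that $ds/dt=\bigl(\cos\phi_j-\cos\phi_i\bigr)/\bigl(\cos\theta_i-\cos\theta_j\bigr)$ has constant sign, using the angular order $\prec$ on $V_s$ to control the denominator. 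Your route avoids invoking that order at all, which is a small gain.

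There is, however, one step that does not hold as stated. In your case $\phi_i=\phi_j$ (the collinear configuration $p(t'),v_i,v_j$ on a common ray) you conclude that ``the nearer corner blocks the farther,'' contradicting visibility of both. But visibility is not necessarily destroyed by a collinear corner: a corner is a single boundary point, and the segment $\seg{p(t')v_j}$ can graze the nearer corner $v_i$ and still lie in the (closed) free space, so both corners can be visible in exactly this degenerate position. The correct way to kill this case --- and the one the paper uses --- is to observe that such a $t'$ forces $p(t')$ to lie on an edge of $SPM(u_i)$ or $SPM(u_j)$ (the window/extension edge through $v_i$ and $v_j$), i.e., $p(t')$ is a breakpoint, contradicting the fact that the interior of $I_t$ contains no breakpoints induced by $SPM(u_i)$ or $SPM(u_j)$. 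Your second sub-case ($v_i,v_j$ strictly on opposite sides of the line supporting $T$) is fine for boundary edges, with the caveat that if one of the corners lies on that supporting line the configuration degenerates back to the collinear case and must be handled by the breakpoint argument as well. With that one justification replaced, your proof goes through.
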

\begin{proof}
Let $I_s$ and $I_t$ be intervals such that $\C_i \cap \C_j =
I_s\times I_t$. Note that $I_s = [s_0, s_1)$ and $I_t = [t_k,
t_{k'})$ for some $0 \leq k<k' \leq b_T$.

First, note that if $u_i = u_j$, $\C_i \cap
\C_j = \emptyset$ and thus $B(i,j)=\emptyset$, so the lemma is true.
Thus, we assume that $u_i \neq u_j$. Regarding $v_i$ and
$v_j$, there are two cases: $v_i = v_j$ or $v_i\neq v_j$. In the
former case, we get $|p(s)u_i| - |p(s)u_j| = d(u_j, v_j) - d(u_i,
v_i)$ from equation $h_i(s,t) = h_j(s,t)$. Observe that variable $t$
is readily eliminated from the equation, and thus if there exists
$(s',t') \in \C_i \cap \C_j$ with $(s',t')\in B(i,j)$, we have $(s',
t) \in B(i,j)$ for every other $t \in I_t$. Hence, by
Lemma~\ref{lemma:monotone}, $B(i,j)$ is empty or a straight line
segment in $\C_i \cap \C_j$ which is parallel to $t$-axis, and thus
the lemma is shown.

Now, we consider the latter case where $v_i \neq v_j$. Without loss
of generality, we assume that $u_i \prec u_j$. Recall that if $u_i
\prec u_j$, then $\angle p(s_0)p(s) u_i < \angle p(s_0)p(s)u_j$ for
any $s$ in the interior of $I_s$. We denote $\theta_i(s) := \angle
p(s_0)p(s) u_i$ and $\theta_j(s) := \angle p(s_0)p(s) u_j$. On the
other hand, we also have a similar relation for $v_i$ and $v_j$. Let
$\phi_i(t) := \angle p(t_0)p(t)v_i$ and $\phi_j(t) := \angle
p(t_0)p(t)v_j$. Observe that $\phi_i(t)$ and $\phi_j(t)$ are
continuous functions of $t$, and if $\phi_i(t')=\phi_j(t')$ at
$t=t'$, then $p(t')$ is a breakpoint induced by $SPM(u_i)$ or
$SPM(u_j)$. Since $I_t$ contains no such breakpoint induced by
$SPM(u_i)$ or $SPM(u_j)$ in its interior, either $\phi_i(t) <
\phi_j(t)$ or $\phi_i(t) > \phi_j(t)$ for all $t$ in the interior of
$I_t$; that is, the sign of $\phi_j(t) - \phi_i(t)$ is constant.

%

Since for any $s,s' \in I_s$ with $s'>s$ we have $|p(s')p(s)| = s' -
s$ by our parametrization, we can represent $|p(s)u_i| =
\sqrt{(s+a_i)^2+b_i^2}$ and $|v_ip(t)| = \sqrt{(t+c_i)^2+d_i^2}$,
where $a_i$, $b_i$, $c_i$ and $d_i$ are constants depending on
$u_i$, $v_i$, and parametrization $p$. More specifically, $s+a_i$
denotes a signed distance between $p(s)$ and the perpendicular foot
of $u_i$ onto the line supporting $p(I_s)$, and $b_i$ is the
distance between $u_i$ and the line supporting $p(I_s)$. See
Figure~\ref{fig:bisector}. Thus, $h_i(s,t)$ can represented as
$h_i(s,t) = \sqrt{(s+a_i)^2+b_i^2} + \sqrt{(t+c_i)^2+d_i^2} + d(u_i,
v_i)$.

\begin{figure}[t]
\centering
\includegraphics[width=.75\textwidth]{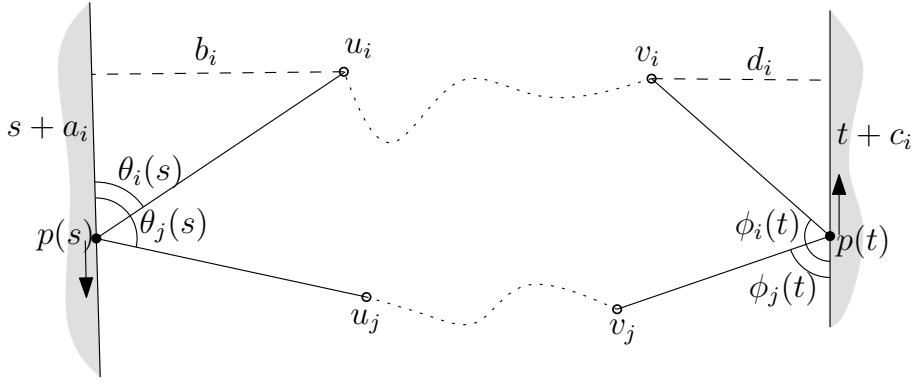}
\caption{Illustration to the proof of Lemma~\ref{lemma:increasing}}
\label{fig:bisector}
\end{figure}
%

Now, we differentiate the both sides of equation $h_i(s,t) =
h_j(s,t)$ by $t$ to obtain the derivative $\frac{d s}{d t}$:
\[\frac{s+a_i}{\sqrt{(s+a_i)^2+b_i^2}} \frac{ds}{dt} +
\frac{t+c_i}{\sqrt{(t+c_i)^2+d_i^2}}  =
\frac{s+a_j}{\sqrt{(s+a_j)^2+b_j^2}} \frac{ds}{dt} +
\frac{t+c_j}{\sqrt{(t+c_j)^2+d_j^2}}.\]
Rearranging this, we obtain
 \[
  \frac{ds}{dt} = \frac {\displaystyle -\frac{t+c_i}{\sqrt{(t+c_i)^2+d_i^2}} + \frac{t+c_j}{\sqrt{(t+c_j)^2+d_j^2}}} {\displaystyle \frac{s+a_i}{\sqrt{(s+a_i)^2+b_i^2}} - \frac{s+a_j}{\sqrt{(s+a_j)^2+b_j^2}}}
  = \frac{-\cos(\phi_i(t))+\cos(\phi_j(t))}{\cos(\theta_i(s))-\cos(\theta_j(s))}.
 \]
Since $0 < \theta_i(s) < \theta_j(s) < \pi$, we have
$\cos(\theta_i(s)) - \cos(\theta_j(s)) > 0$. Also, as discussed
above, $\phi_j(t) - \phi_i(t)$ has a constant sign when $t$ varies
within the interior of $I_t$. Thus, $\cos(\phi_j(t)) -
\cos(\phi_i(t))$ has a constant sign, and $\frac{ds}{dt}$ also has a
constant sign at any $(s,t) \in B(i,j)$. Furthermore,
$\frac{ds}{dt}$ is continuous and has no singularity 
in the
interior of $\C_i\cap\C_j$. This, together with
Lemma~\ref{lemma:monotone}, proves the lemma.
\end{proof}

Now, we know that $B(i,j)$ can be seen as the graph of a partial
function $\{s = \gamma(t)\}$. Also, Lemma~\ref{lemma:increasing}
implies that $B(i,j)$ \emph{bisects} $\C_i \cap \C_j$ into two
connected regions $R(i,j)$ and $R(j,i)$, where $R(i,j) := \{ (s,t)
\in \C_i \cap \C_j \mid h_i(s,t) < h_j(s,t)\}$ and $R(j,i) := \{
(s,t) \in \C_i \cap \C_j \mid h_i(s,t) > h_j(s,t)\}$. Let $\M(i)$ be
the set of points $(s,t)$ where the minimum of $h_j(s,t)$ over all $j$
is attained by $h_i(s,t)$. We then
have $\M(i) = \C_i \setminus \bigcup_j R(j,i)$ for each $i$.

%

For easy explanation, 
from now on, we regard the
$s$-axis as the \emph{vertical axis} in $\D$ so that we can say a
point lies \emph{above} or \emph{below} a curve in this sense.

The idea of computing $\M(i)$ is to use the lower and the upper
envelopes of the bisecting curves $B(i,j)$. In order to do so, we
extend $B(i,j)$ to cover the whole $t$-interval $I_t=[t_k, t_{k'})$ in $\C_i \cap
\C_j$ by following operation: For each endpoint of $B(i,j)$,
if it does not lie on the vertical line $\{ t=t_k\}$ or $\{t=t_{k'}\}$,
attach a horizontal segment to reach the vertical line as shown in Figure~\ref{fig:envelopes}(a).
We denote the resulting curve by $\beta(i,j)$; if
$B(i,j) = \emptyset$, define $\beta(i,j)$ as the horizontal segment
connecting two points $(s_0, t_k)$ and $(s_0, t_{k'})$ in $\C_i \cap \C_j$.
Observe now that $\beta(i,j)$
bisects $\C_i \cap \C_j$ into regions $R(i,j)$ and $R(j,i)$, which
lie \emph{above} and \emph{below} $\beta(i,j)$, respectively.

\begin{figure}[]
\centering
\includegraphics[width=.8\textwidth]{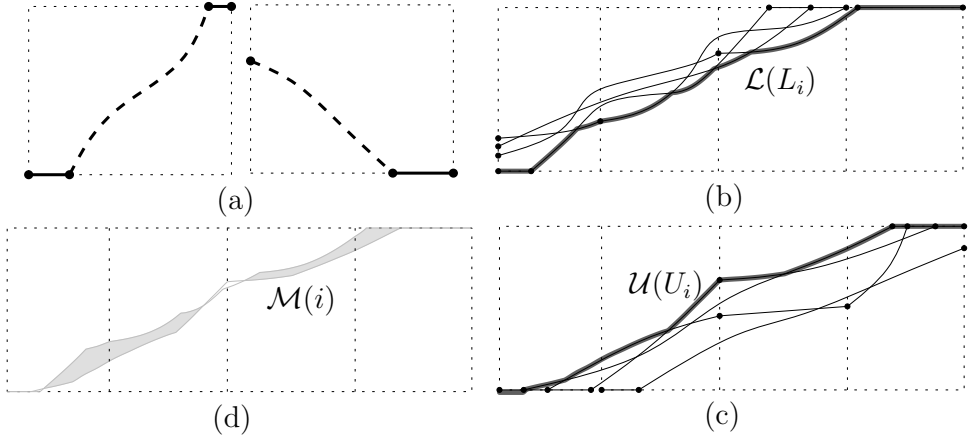}
\caption{(a) How to extend $B(i,j)$ (dashed line) to $\beta(i,j)$ by attaching horizontal segments (solid line);
(b) $\LE(L_i)$, (c) $\UE(U_i)$, and (d) the region $\M(i)$
between them. Here, the dotted boxes are grid cells whose union is
$\C_i$ and the $s$-axis appears vertical.} \label{fig:envelopes}
\end{figure}

Let $\beta(i,j)^+ \subseteq \C_i \cap \C_j$ be the region above
$\beta(i,j)$ and $\beta(i,j)^-$ be the region below $\beta(i,j)$.
For a fixed $i$ with $1\leq i \leq 2n$, we classify the $\beta(i,j)$
into two sets $L_i$ and $U_i$ such that $\beta(i,j) \in L_i$ if
$R(j,i) = \beta(i,j)^+$ or $\beta(i,j) \in U_i$ if $R(j,i) =
\beta(i,j)^-$. Recall that $\M(i) = \C_i \setminus \bigcup_j
R(j,i)$. Thus,
we have
 \[ \M(i) = \C_i \setminus  \left(\bigcup_{\beta \in L_i} \beta^+ \cup \bigcup_{\beta \in U_i} \beta^- \right).\]
The boundary of $\bigcup_{\beta \in L_i} \beta^+$ in $\C_i$ is the lower envelope $\LE(L_i)$
of $L_i$; symmetrically, the boundary of $\bigcup_{\beta \in U_i}
\beta^-$ in $\C_i$ is the upper envelope $\UE(U_i)$ of $U_i$. Therefore,
$\M(i) = \LE(L_i)^- \cap \UE(U_i)^+$, the region below the lower
envelope $\LE(L_i)$ of $L_i$ and above the upper envelope $\UE(U_i)$
of $U_i$, and it can be obtained by computing the overlay of two
envelopes $\LE(L_i)$ and $\UE(U_i)$. See
Figure~\ref{fig:envelopes}(b)--(d). We exploit known results on the
Davenport-Schinzel sequences to obtain the following
lemma~\cite{sa-dsstga-95,h-fuels-89}. 
\begin{lemma} \label{lemma:ds}
 The set $\M(i)$ is of combinatorial complexity $O(\lambda_{66}(n))$ and can be computed in
 $O(\lambda_{65}(n) \log n)$ time, where $\lambda_m(n)$ is the maximum
 length of a Davenport-Schinzel sequence of order $m$ on $n$ symbols.
\end{lemma}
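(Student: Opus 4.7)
The plan is to bound the number of pairwise intersections among the extended bisectors in $L_i$ (and symmetrically in $U_i$), and then invoke the classical Davenport--Schinzel machinery for lower/upper envelopes of $t$-monotone partial functions. First I would bound crossings: for any pair of distinct indices $j, j'$, each of $B(i,j)$ and $B(i,j')$ lies on an algebraic curve of degree at most $8$ (as noted before Lemma~\ref{lemma:monotone}), so by B\'ezout's theorem they meet in at most $8 \cdot 8 = 64$ points. The horizontal stubs that extend each $B(i,j)$ to $\beta(i,j)$ along the vertical lines $\{s = s_0\}$ and $\{s = s_1\}$ contribute only $O(1)$ additional crossings between any two such curves: the $t$-monotonicity from Lemma~\ref{lemma:monotone} together with the absence of breakpoints strictly inside $[s_0, s_1)$ ensures that each stub meets the algebraic piece of another $\beta(i, j')$ in only a constant number of points. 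Hence any two curves $\beta(i, j)$ and $\beta(i, j')$ intersect in at most a constant number of points, and with careful accounting this constant can be taken to be $64$.

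Second, by Lemma~\ref{lemma:increasing} each curve in $L_i \cup U_i$ is a $t$-monotone partial function defined on a subinterval of $I_t = [t_k, t_{k'})$, and $|L_i|, |U_i| \leq 2n$. The classical bound on the complexity of lower envelopes of partial $t$-monotone functions with pairwise crossings bounded by $s$~\cite{sa-dsstga-95} therefore gives $|\LE(L_i)|, |\UE(U_i)| = O(\lambda_{66}(n))$. Since the boundary of $\M(i) = \LE(L_i)^- \cap \UE(U_i)^+$ is contained in $\LE(L_i) \cup \UE(U_i)$, the region $\M(i)$ has combinatorial complexity $O(\lambda_{66}(n))$. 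For the running time I would use Hershberger's divide-and-conquer algorithm~\cite{h-fuels-89} to compute each of $\LE(L_i)$ and $\UE(U_i)$ in $O(\lambda_{65}(n) \log n)$ time, and then compute the overlay of the two $t$-monotone chains in linear time by a single sweep in $t$.

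The main obstacle is the pairwise-crossings bookkeeping once the horizontal stubs are taken into account: one must verify that attaching horizontal segments along $\{s = s_0\}$ and $\{s = s_1\}$ cannot introduce a super-constant number of new crossings with another curve's algebraic piece, exploiting Lemma~\ref{lemma:monotone} and the fact that $[s_0, s_1)$ contains no interior breakpoint. Beyond that, the argument is a direct appeal to the Davenport--Schinzel upper bound and Hershberger's lower-envelope algorithm, so no further geometric insight is required.
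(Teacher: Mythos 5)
Your proposal is correct and follows essentially the same route as the paper: bound pairwise intersections of the curves via B\'ezout's theorem, invoke the Davenport--Schinzel bound for envelopes of $t$-monotone partial arcs, compute each envelope with Hershberger's algorithm, and merge the two by a sweep in $t$. The only cosmetic difference is bookkeeping: the paper avoids your ``stub accounting'' worry by simply splitting the $2n$ curves $\beta(i,j)$ into at most $6n$ arcs (one algebraic piece of degree $8$ plus two segments each) and bounding intersections per pair of arcs by $64$, which yields the same $O(\lambda_{66}(n))$ bound; as the paper itself notes, the exact constant is immaterial.
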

\begin{proof}
$\beta(i,j)$ consists of at most three arcs, at most one algebraic
curve of degree $8$ and at most two straight segments. Thus, we have
at most $6n$ algebraic arcs of degree at most $8$. Any two such arcs
can intersect each other at most $64$ times by B\'{e}zout's Theorem~\cite{h-ag-77}.
Thus, each of $\LE(L_i)$ and $\UE(U_i)$ has complexity
$O(\lambda_{66}(n))$ and can be computed in $O(\lambda_{65}(n) \log
n)$ time~\cite{sa-dsstga-95,h-fuels-89}.

After sorting the vertices on these envelopes in $t$-increasing
order, we can easily specify all intersections between $\LE(L_i)$
and $\UE(U_i)$ in the same bound.
\end{proof}

It should be noted here that the exact constant $66$ is not relevant;
it only matters that this is some constant.

We can compute the minimization diagram $\M$ by computing each
$\M(i)$ in $O(n \lambda_{65}(n) \log n)$ time. In the same time
bound, we can build a point location structure on $\M$. Finally, we
conclude our main theorem.
\begin{theorem} \label{thm:log-query}
 One can preprocess a given polygonal domain $\Poly$ in
 $O(n^4\lambda_{65}(n) \log n)$ time into a data structure of size
 $O(n^4\lambda_{66}(n))$ that answers the two-point shortest path
 query restricted on the boundary $\bd \Poly$ in $O(\log n)$ time.
\end{theorem}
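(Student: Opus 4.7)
The plan is to assemble the global minimization diagram $\M$ on $\D$ out of the per-block diagrams produced by Lemma~\ref{lemma:ds}, then mount a standard planar point-location structure on top. Following the decomposition of Section~\ref{sec:log_query}.2, I would iterate over all ordered pairs $(S,T)$ of boundary edges, subdivide $[s_0,s_{b_S})\times[t_0,t_{b_T})$ into blocks $\C=[s_{\ell},s_{\ell+1})\times[t_{jn},t_{(j+1)n})$ of at most $n$ grid cells, and compute the restriction of $\M$ to each block by the envelope method of this subsection. Within one block, Lemma~\ref{lemma:patch} gives at most $2n$ partial functions $h_i$, and Lemma~\ref{lemma:ds} produces the region $\M(i)$ for each $i$ in $O(\lambda_{65}(n)\log n)$ time with complexity $O(\lambda_{66}(n))$. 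Taking the union over the $O(n)$ indices $i$ yields a refinement of the block into $\M$, with total per-block complexity $O(n\lambda_{66}(n))$ and per-block construction time $O(n\lambda_{65}(n)\log n)$.

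Next I would sum these costs across blocks using the same double-counting as in the $O(n^{5+\epsilon})$-space theorem, but with the $n^{2+\epsilon}$ factor replaced by $n\lambda_{65}(n)\log n$. The number of blocks associated to a pair $(S,T)$ of boundary edges is $b_S\lceil b_T/n\rceil$, and using $\sum_S b_S=\sum_T b_T=O(n^2)$,
\[
 \sum_{S,T} b_S\,\Bigl\lceil \tfrac{b_T}{n}\Bigr\rceil\cdot n\lambda_{65}(n)\log n
 = O(n\lambda_{65}(n)\log n)\cdot O(n^2)\cdot O(n)
 = O(n^4\lambda_{65}(n)\log n),
\]
and an identical calculation gives total space $O(n^4\lambda_{66}(n))$ for the resulting planar subdivision $\M$.

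For the query, I would first handle the visibility case $p(t)\in VP(p(s))$ with the $O(n^2\log n)$-space, $O(\log n)$-query structure of~\cite{cm-tpespqp-99} cited in Section~\ref{sec:log_query}. Otherwise, compute the parameters $(s,t)\in\D$ in $O(\log n)$ time (locate $p$ and $q$ within the $h+1$ parameterized boundary curves), then perform a planar point location in $\M$, which has $O(n^4\lambda_{66}(n))$ cells and therefore supports $O(\log n)$ queries with an additional logarithmic factor in preprocessing that is absorbed. The located cell identifies the pair $(u_i,v_i)\in V\times V$ realizing the shortest path; the length is $|\seg{p(s)u_i}|+d(u_i,v_i)+|\seg{v_ip(t)}|$, and the path itself can be reported in output-sensitive time by following the precomputed shortest path between $u_i$ and $v_i$.

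The only genuinely delicate step is verifying that the per-block envelope machinery of Lemma~\ref{lemma:ds} globalizes correctly: the partial functions $h_i$ are defined on different subdomains for different blocks, and seams between adjacent blocks must be handled so that the overlaid diagram is a valid planar subdivision with no degeneracies across block boundaries. I would resolve this by treating each block independently as in Section~\ref{sec:log_query}.2, storing the $O(n\lambda_{66}(n))$-size diagram per block inside a two-level search structure (first locating the block by the breakpoint-induced grid in $O(\log n)$ time, then locating within the block's diagram in $O(\log n)$ time); this side-steps the need for a single globally consistent planar graph while preserving the claimed preprocessing, space, and query bounds.
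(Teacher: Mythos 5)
Your proposal follows the paper's own route: compute each $\M(i)$ per block via Lemma~\ref{lemma:ds} in $O(n\lambda_{65}(n)\log n)$ time, sum over the $O(n^3)$ blocks exactly as in the $O(n^{5+\epsilon})$ theorem with the per-block cost replaced, and finish with point location (the paper leaves the block-seam issue implicit, but your two-level search structure is the natural reading of it). The argument is correct and matches the intended proof.
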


\section{Tradeoffs Between Space and Query Time} \label{sec:tradeoff}
In this section, we provide a space/query-time tradeoff. We use the
technique of partitioning $V$, which has been introduced in Chiang and
Mitchell~\cite{cm-tpespqp-99}.

Let $\delta$ be a positive number with $0 < \delta \leq 1$. We
partition the corner set $V$ into $m = n^{1-\delta}$ subsets $V_1,
\ldots, V_m$ of near equal size $O(n^\delta)$. For each such subset
$V_i$ of corners, we run the algorithm described above with little
modification: We build the shortest path maps $SPM(u)$ only for
$u\in V_i$ and care about only $O(n^{1+\delta})$ breakpoints induced
by such $SPM(u)$. Thus, we consider only the paths from $p(s)$ via
$u \in V_i$ and $v \in V$ to $p(t)$, and thus $O(n^{1+\delta})$
functions $f_{u,v}$ for $u\in V_i$ and $v\in V$.

Since we deal with less number of functions, the cost of
preprocessing reduces from $O(n)$ to $O(n^\delta)$ at several
places. We take blocks of $O(n^\delta)$ grid cells contained in $\D$
and the number of such blocks is $O(n^{2+\delta})$. For each such
block, we spend $O(n^\delta \lambda_{65}(n^\delta) \log n)$ time to
construct a point location structure for the minimization map $\M_i$
of the functions. Iterating all such blocks, we get running time
$O(n^{2+2\delta}\lambda_{65}(n^\delta) \log n)$ for a part $V_i$ of
$V$. Repeating this for all such subsets $V_i$ yields
$O(n^{3+\delta}\lambda_{65}(n^\delta) \log n)$ construction time.

Each query is processed by a series of $m$ point locations on every
$\M_i$, taking $O(m\log n) = O(n^{1-\delta} \log n)$ time.

\begin{theorem} \label{thm:tradeoff}
 Let $\delta$ be a fixed parameter with $0 < \delta \leq 1$.
 Using $O(n^{3+\delta}\lambda_{65}(n^\delta) \log n)$ time and
 $O(n^{3+\delta}\lambda_{66}(n^\delta))$ space,
 one can compute a data structure for $O(n^{1-\delta} \log n)$-time two-point shortest path
 queries restricted on the boundary $\bd \Poly$.
\end{theorem}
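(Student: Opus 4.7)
The plan is to repeat the construction behind Theorem~\ref{thm:log-query} on several smaller subproblems, each defined by a piece of a partition of $V$, and then answer a query by querying every subproblem and taking the minimum. Concretely, I will fix $m := \lceil n^{1-\delta} \rceil$ and partition $V$ into subsets $V_1,\dots,V_m$ of size $O(n^{\delta})$; for each $V_i$ I will build only the shortest path maps $SPM(u)$ with $u\in V_i$, and restrict attention to those functions $f_{u,v}$ in which the first corner $u$ along the shortest path lies in $V_i$. Since every shortest path between two boundary points passes through some corner $u\in V$ (when $p(s)$ is not visible from $p(t)$), a query will be answered by performing a point location in each of the $m$ resulting minimization diagrams $\M_i$ and reporting the smallest value; visibility is handled separately in $O(\log n)$ time as recalled in Section~\ref{sec:log_query}.

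For a single subset $V_i$, I will follow the recipe of Section~\ref{sec:log_query}.3 verbatim, the only change being that the number of breakpoints and of partial functions $h_i$ is dictated by $|V_i|=O(n^\delta)$ rather than by $|V|=n$. In particular, every $SPM(u)$ with $u\in V_i$ contributes $O(n)$ breakpoints on $\bd\Poly$, so the total number of breakpoints induced by this family is $O(n^{1+\delta})$, and the $2$-dimensional parameter domain $\D$ is partitioned into $O(n^{2+\delta})$ blocks containing $O(n^\delta)$ grid cells each. On each such block, Lemma~\ref{lemma:patch} applied to $V_i$ gives at most $O(n^\delta)$ partial algebraic surface patches, and Lemmas~\ref{lemma:monotone}, \ref{lemma:increasing}, \ref{lemma:ds} then yield a minimization diagram of complexity $O(\lambda_{66}(n^\delta))$, computable in $O(\lambda_{65}(n^\delta)\log n)$ time, on top of which the usual point location structure is built with only a logarithmic overhead (absorbed into the existing $\log n$ factor).

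Summing the per-block cost over the $O(n^{2+\delta})$ blocks gives $O(n^{2+\delta}\cdot\lambda_{65}(n^\delta)\log n)$ time and $O(n^{2+\delta}\cdot\lambda_{66}(n^\delta))$ space to handle a single $V_i$; multiplying by $m = O(n^{1-\delta})$ subsets yields the claimed preprocessing and space bounds of $O(n^{3+\delta}\lambda_{65}(n^\delta)\log n)$ and $O(n^{3+\delta}\lambda_{66}(n^\delta))$. For a query $(p(s),p(t))$, I perform one $O(\log n)$ point location in each $\M_i$ to retrieve the candidate distance restricted to first-corner in $V_i$, take the minimum over $i$, and compare with the direct-visibility value; this costs $O(m\log n)=O(n^{1-\delta}\log n)$.

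The only part requiring genuine verification is the claim that the analysis of Section~\ref{sec:log_query}.3 carries over mechanically to a subset $V_i$ of corners; the potentially delicate point is that the ordering argument of Lemma~\ref{lemma:ordering} and the breakpoint-based monotonicity argument of Lemma~\ref{lemma:monotone} implicitly use \emph{all} breakpoints on the boundary, not just those induced by $\{SPM(u)\mid u\in V_i\}$. Since in the tradeoff construction the blocks are defined using only the $O(n^{1+\delta})$ breakpoints from $V_i$, I must check that the weaker refinement still respects the non-collinearity property used in Lemma~\ref{lemma:ordering} and the ``no breakpoint inside $[s_0,s_1)$'' step used in Lemmas~\ref{lemma:monotone} and \ref{lemma:increasing}. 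This is where I expect to have to be careful: one resolves it by observing that the breakpoints relevant to each of those lemmas are exactly those induced by the $SPM(u)$ appearing in the functions under consideration, i.e.\ those with $u\in V_i$, so the partition by $V_i$-induced breakpoints is already fine enough. Once this compatibility is confirmed, the remaining bookkeeping is arithmetic, and Theorem~\ref{thm:tradeoff} follows.
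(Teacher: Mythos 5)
Your overall plan is exactly the paper's: partition $V$ into $m=n^{1-\delta}$ subsets $V_i$ of size $O(n^\delta)$, build the Section~\ref{sec:log_query} structure per subset using only the $O(n^{1+\delta})$ breakpoints induced by $\{SPM(u)\mid u\in V_i\}$, and answer a query by $m$ point locations plus the visibility check. Your discussion of why Lemmas~\ref{lemma:ordering}--\ref{lemma:increasing} survive the coarser, $V_i$-induced refinement is also the right resolution (the collinearity/breakpoint arguments in those proofs only ever involve $SPM(u_i)$ and $SPM(u_j)$ for the corners actually appearing in the functions at hand, which all lie in $V_i$).

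However, your cost accounting contains a concrete error. Lemma~\ref{lemma:ds} bounds the complexity of a \emph{single} region $\M(i)$, i.e.\ the minimality region of one partial function $h_i$; the minimization diagram of a whole block is the union of the $\M(i)$ over all $\Theta(n^\delta)$ partial functions in that block. So the per-block cost is $O(n^\delta\lambda_{65}(n^\delta)\log n)$ time and $O(n^\delta\lambda_{66}(n^\delta))$ space, not $O(\lambda_{65}(n^\delta)\log n)$ and $O(\lambda_{66}(n^\delta))$ as you claim. With your understated per-block cost, summing over the $O(n^{2+\delta})$ blocks and the $n^{1-\delta}$ subsets gives $n^{2+\delta}\cdot n^{1-\delta}=n^{3}$, i.e.\ $O(n^{3}\lambda_{65}(n^\delta)\log n)$ --- your final step asserts this product equals $n^{3+\delta}$, which is false. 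The two mistakes happen to compensate: the correct chain is $O(n^\delta\lambda_{65}(n^\delta)\log n)$ per block, times $O(n^{2+\delta})$ blocks gives $O(n^{2+2\delta}\lambda_{65}(n^\delta)\log n)$ per subset $V_i$, times $n^{1-\delta}$ subsets gives the claimed $O(n^{3+\delta}\lambda_{65}(n^\delta)\log n)$. Fix the per-block bound (and the corresponding space bound) and the final multiplication, and the argument matches the paper's proof.
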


Remark that when $\delta = 1$, we obtain
Theorem~\ref{thm:log-query}, and that $O(n^{3+\epsilon})$ time and
space is enough for sublinear time query. Note that if $O(n)$ time
is allowed for processing each query, $O(n^2)$ space
and $O(n^2\log n)$ preprocessing time is sufficient.

\section{Extensions to Segments-Restricted Queries} \label{sec:extension}
Our approach easily extends to the two-point queries in which
queried points are restricted to be on a given segment or a given
polygonal chain lying in the free space $\Poly$.

Let $\mathcal{S}_s$ and $\mathcal{S}_t$ be two sets of $m_s$ and $m_t$ line segments,
respectively, within $\Poly$. In this section, we restrict a query
pair $(p,q)$ of points to lie on $\mathcal{S}_s$ and $\mathcal{S}_t$ each. We will refer
to this type of two-point query as a \emph{$(\mathcal{S}_s, \mathcal{S}_t)$-restricted two-point query}.
As we did above, we take two segments $S \in \mathcal{S}_s$ and $T \in
\mathcal{S}_t$ and let $b_S$ and $b_T$ be the number of breakpoints --- the
intersection points with an edge of $SPM(v)$ for some $v\in V$ ---
on $S$ and $T$, respectively. Also, parameterize $S$ and $T$ as
above so that we have two bijections $p : [0, |S|] \to S$
and $q : [0,|T|] \to T$.

Any path from a point on $S$ leaves to one of the two sides of $S$.
Thus, the idea of handling such a segment within the free space
$\Poly$ is to consider two cases separately. Here, we regard $S$ and
$T$ as directed segments in direction of movement of $p(s)$ and
$q(t)$ as $s$ and $t$ increases, and consider only one case where
paths leave $S$ to its left side and arrive at $T$ from its left
side. The other cases are analogous.

Then, the situation is almost identical to that we considered in
Section~\ref{sec:log_query}.2. For a pair of segments $S$ and $T$,
we can construct a query structure in $O(b_S \lceil \frac{b_T}{n}
\rceil n \lambda_{65}(n)\log n)$ time. Unfortunately, $b_S$ and
$b_T$ can be as large as $O(n^2)$, yielding the same time bound for
$(\bd \Poly, \bd \Poly)$-restricted two-point queries in the worst case.
Thus, in the worst case, we need additional factor of $m_sm_t$ as follows.

\begin{theorem} \label{thm:convexchain}
 Let $\mathcal{S}_s$ and $\mathcal{S}_t$ be two sets of $m_s$ and $m_t$ (possibly crossing)
 line segments, respectively, within $\Poly$, and $\delta$ be a fixed parameter with $0 < \delta \leq 1$.
 Then, using $O(m_sm_tn^{3+\delta}\lambda_{65}(n^\delta) \log n)$ time and
 $O(m_sm_tn^{3+\delta}\lambda_{66}(n^\delta))$ space,
 one can compute a data structure for $O(n^{1-\delta} \log (n+m_s+m_t))$-time
 $(\mathcal{S}_s, \mathcal{S}_t)$-restricted two-point queries.
\end{theorem}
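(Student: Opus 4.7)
The plan is to reduce $(\mathcal{S}_s,\mathcal{S}_t)$-restricted queries to per-pair applications of the machinery of Section~\ref{sec:log_query}.2 and Section~\ref{sec:tradeoff}, augmented with a case split on which side of $S$ the path departs and on which side of $T$ it arrives. For each ordered pair $(S,T)\in \mathcal{S}_s\times\mathcal{S}_t$ and each of the four side-combinations, I would parameterize $S$ and $T$ by arclength to obtain bijections $p\colon[0,|S|]\to S$ and $q\colon[0,|T|]\to T$, set up the 2-dimensional query domain $\D_{S,T}:=[0,|S|]\times[0,|T|]$, and redefine the functions $f_{u,v}(s,t)$ and the notion of breakpoints (intersections with edges of $SPM(v)$ for $v\in V$) on this domain exactly as in Section~\ref{sec:log_query}. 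The side constraint changes only which corners $u,v$ are eligible on each subdomain, not the algebraic form of $f_{u,v}$.

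Next, I would apply the $V$-partition tradeoff: split $V$ into $m = n^{1-\delta}$ subsets $V_1,\dots,V_m$ of size $O(n^\delta)$, and for each $V_i$ build only the maps $SPM(u)$ with $u\in V_i$. Then the number of breakpoints on any single segment $S$ or $T$ induced by these maps is $O(n^{1+\delta})$. Within a pair $(S,T)$, a side combination, and a subset $V_i$, I would follow the block decomposition of Section~\ref{sec:log_query}.2 verbatim: form columns of grid cells in $\D_{S,T}$ defined by breakpoints on $S$, group them along $T$ into blocks of at most $n^\delta$ grid cells, and compute the overlay of lower and upper envelopes promised by Lemmas~\ref{lemma:patch}--\ref{lemma:ds}. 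The proofs of those lemmas transfer without change: they rely only on constancy of the visibility profile and of the angular ordering within a grid cell, on the fact that no breakpoint induced by $SPM(u_i)$ or $SPM(u_j)$ lies in the interior of the current $t$-interval, and on the shape of the hyperbolic bisector, none of which requires $S$ or $T$ to be a subset of $\bd\Poly$.

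Counting costs, each block is processed in $O(n^\delta \lambda_{65}(n^\delta) \log n)$ time and produces $O(n^\delta \lambda_{66}(n^\delta))$ of diagram complexity. The number of blocks per pair per subset is $O(b_S \lceil b_T/n^\delta\rceil) = O(n^{1+\delta}\cdot n) = O(n^{2+\delta})$, giving $O(n^{2+2\delta}\lambda_{65}(n^\delta)\log n)$ per pair per $V_i$, hence $O(n^{3+\delta}\lambda_{65}(n^\delta)\log n)$ per pair after summing over the $n^{1-\delta}$ subsets. Multiplying by the four side cases and all $m_sm_t$ pairs yields the claimed preprocessing time and space. For a query $(p,q)$, I would use a precomputed index on $\mathcal{S}_s$ and $\mathcal{S}_t$ to locate the containing segments $S$ and $T$ in $O(\log(m_s+m_t))$ time (breaking ties arbitrarily at crossings), pick the side combination determined by the orientation of the desired path, and then perform $n^{1-\delta}$ point locations in the corresponding diagrams $\M_i$, each costing $O(\log n)$, for a total of $O(n^{1-\delta}\log(n+m_s+m_t))$.

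The main obstacle is not the envelope machinery, which ports mechanically from the $\bd\Poly$ case, but the global accounting. For boundary edges the total breakpoint count telescopes to $O(n^2)$ overall, and this is exactly what allows the proof of Theorem~\ref{thm:log-query} to avoid any edge-count factor via $\sum_T b_T=O(n^2)$; for an arbitrary family of interior segments, which may moreover cross, we have no such identity, so we must pay $O(n^{1+\delta})$ breakpoints separately for each segment within each $V_i$. This is precisely what forces the multiplicative $m_sm_t$ factor in the statement, while the four side-cases per pair contribute only a constant overhead.
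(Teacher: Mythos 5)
Your proposal is correct and follows essentially the same route as the paper: a per-pair application of the block/envelope machinery of Section~\ref{sec:log_query}.2--3 combined with the $V$-partition tradeoff of Section~\ref{sec:tradeoff}, with a constant number of side-cases per segment pair, and with the $m_sm_t$ factor forced exactly because the telescoping identity $\sum_T b_T = O(n^2)$ available for boundary edges fails for arbitrary (possibly crossing) interior segments. One small correction: at query time you cannot ``pick the side combination determined by the orientation of the desired path,'' since that orientation is only known once the shortest path is found; you must query all four side-combination structures and return the minimum, which is only a constant-factor overhead and does not affect the stated bounds.
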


Remark that in practice we expect that the number of breakpoints
$\sum_S b_S$ and $\sum_T b_T$ is not so large as $\Omega(n^2)$ that
the preprocessing and required storage would be much less than the
worst case bounds.

\section{Concluding Remarks} \label{sec:conclusion}
In this paper, we posed the variation of the two-points query
problem in polygonal domains where the query points are restricted
in a specified subset of the free space. And we obtained
significantly better bounds for the boundary-restricted two-point
queries than for the general queries.

Despite of its importance, the two-point shortest path query problem
for the polygonal domains is not well understood. There is a huge
gap ($O(n)$ to $O(n^{11})$) about logarithmic query between the
simple polygon case and the general case but still the reason why we
need such a large storage is still unclear. On the other hand,
restriction on the query domain provides another possibility of
narrowing the gap with several new open problems: %
(1) What is the right upper bound on the complexity of the lower
envelope defined by the functions $f_{u,v}$ on the parameterized
query domain? And what about any lower bound construction? %
(2) If the query domain is a simple $2$-dimensional shape, such as a
triangle, then can one achieve a better performance than the general
results by Chiang and Mitchell?

We would carefully conjecture that our upper bound $\tilde{O}(n^5)$
for logarithmic query could be improved to $\tilde{O}(n^4)$. Indeed, we
have $O(n^4)$ grid cells on the parameterized query domain and
whenever we cross their boundaries, changes in the involved
functions are usually bounded by a constant amount. Thus, if one
could find a clever way of updating the functions and their lower
envelope, it would be possible to achieve an improved bound.

\section*{Acknowledgement}
The authors thank Matias Korman for fruitful discussion.

{ 

}

\end{document}